\newcommand{\sol}{\textsc{sol}}
\begin{document}

\title{Concave connection cost Facility Location and the Star Inventory Routing problem. \thanks{Authors were supported by the NCN grant number 2015/18/E/ST6/00456}}

% \authorrunning{J. Byrka et al.}
\titlerunning{Concave connection cost Facility Location and \textsc{SIRPFL}}

\author{Jarosław Byrka\inst{1}\orcidID{0000-0002-3387-0913} \and \\
Mateusz Lewandowski\inst{1}\orcidID{0000-0003-2912-099X}}

\authorrunning{J. Byrka and M. Lewandowski}
% \authorrunning{XXX and YYY}

\institute{Institute of Computer Science, University of Wrocław, Poland}

\maketitle

\begin{abstract}
We study a variant of the \emph{uncapacitated facility location} (\textsc{ufl}) problem, where connection costs of clients are defined by (client specific) concave nondecreasing functions of the connection distance in the underlying metric. A special case capturing the complexity of this variant is the setting called \emph{facility location with penalties} where clients may either connect to a facility or pay a (client specific) penalty. 

We show that the best known approximation algorithms for \textsc{ufl} may be adapted to the concave connection cost setting. The key technical contribution is an argument that the JMS algorithm for \textsc{ufl} may be adapted to provide the same approximation guarantee for the more general concave connection cost variant.

We also study the \emph{star inventory routing with facility location} (\textsc{sirpfl}) problem that was recently introduced by Jiao and Ravi,
which asks to jointly optimize the task of clustering of demand points with the later serving of requests within created clusters. We show that the problem may be reduced to the concave connection cost facility location and substantially improve the approximation ratio for all three variants of \textsc{sirpfl}.

\keywords{Facility location \and Inventory Routing \and Approximation}
	\end{abstract}

\section{Introduction}
    The uncapacitated facility location (\textsc{ufl}) problem has been recognized by both theorists and practitioners as one of the most fundamental problems in combinatorial optimization. In this classical NP-hard problem, we are given a set of facilities $F$ and a set of clients $C$. We aim to open a subset of facilities and connect each client to the closest opened facility. The cost of opening a facility $i$ is $f_i$ and the cost of connecting the client $j$ to facility $i$ is the distance $d_{j,i}$. The distances $d$ are assumed to define a symmetric metric. We want to minimize the total opening costs and connection costs.

The natural generalization is a variant with penalties. For each client $j$, we are given its penalty $p_j$. Now, we are allowed to reject some clients, i.e., leave them unconnected and pay some fixed positive penalty instead. The objective is to minimize the sum of opening costs, connection costs and penalties. We call this problem facility location with penalties and denote as \textsc{flp}.

We also study inventory routing problems that roughly speaking deal with scheduling the delivery of requested inventory to minimize the joint cost of transportation and storage subject to the constraint that goods are delivered in time. The approximability of such problems have been studied, see e.g.,~\cite{nagarajan2016approximation}.

Recently, Jiao and Ravi~\cite{IRP:WADS} proposed to study a combination of inventory routing and facility location. The obtained general problem appears to be very difficult, therefore they focused on a special case, where the delivery routes are stars. They called the resulting problem the Star Inventory Routing Problem with Facility Location (\textsc{sirpfl}). Formally, the problem can be described as follows. We are given a set of clients $D$ and facility locations $F$ with opening costs $f_i$ and metric distances $d$ as in the \textsc{ufl} problem. Moreover we are given a time horizon $1, \dots, T$ and a set of demand points $(j,t)$ with $u^j_t$ units of demand for client $j \in D$ due by day $t$. Furthermore, we are given holding costs $h^j_{s,t}$ per unit of demand delivered on day $s$ serving $(j,t)$. The goal is to open a set of facilities, assign demand points to facilities, and plan the deliveries to each demand point from its assigned facility. For a single delivery on day $t$ from facility $i$ to client $j$ we pay the distance $d_{j,i}$. The cost of the solution that we want to minimize is the total opening cost of facilities, delivery costs and the holding costs for early deliveries.

The above \textsc{sirpfl} problem has three natural variants. In the \emph{uncapacitated} version a single delivery can contain unlimited number of goods as opposed to \emph{capacitated} version, where a single order can contain at most $U$ units of demand. Furthermore, the capacitated variant can be \emph{splittable}, where the daily demand can be delivered across multiple visits and the \emph{unsplittable}, where all the demand $(j,t)$ must arrive in a single delivery (for feasibility, the assumption is made that a single demand does not exceed the capacity $U$).

\subsection{Previous work}
	The metric \textsc{ufl} problem has a long history of results~\cite{GUHA1999228,ChudakS03,JainV01,MahdianYZ02,Sviridenko02}. The current best approximation factor for \textsc{ufl} is $1.488$ due to Li~\cite{li20111}. This is done by combining the bifactor\footnote{Intuitively a bifactor $(\lambda_f, \lambda_c)$ means that the algorithm pays at most $\lambda_f$ times more for opening costs and $\lambda_c$ times more for connection cost than the optimum solution.} $(1.11,1.78)$-aproximation algorithm by Jain et al.~\cite{Jain:2003:GFL:950620.950621} (JMS algorithm) with the LP-rounding algorithm by Byrka and Aardal~\cite{byrka2010optimal}. The analysis of the JMS algorithm crucially utilizes a \emph{factor revealing LP} by which the upper bound on the approximation ratio of the algorithm is expressed as a linear program. For the lower bounds, Sviridenko~\cite{sviridenkohardness} showed that there is no better than $1.463$ approximation for metric $\textsc{ufl}$ unless $\text{P} \neq \text{NP}$.

	The above hardness result transfers to the penalty variant as \textsc{flp} is a generalization of \textsc{ufl}. For approximation, the long line of research~\cite{CharikarKMN01,XuX05,XuX09,GeunesLRS11,LiDXX15} stopped with the current best approximation ratio of $1.5148$ for \textsc{flp}. It remains open\footnote{Qiu and Kern~\cite{qiu2016factor} claimed to close this problem, however they withdrawn their work from arxiv due to a crucial error.}, whether there is an algorithm for \textsc{flp} matching the factor for classical \textsc{ufl} without penalties.

	For the \textsc{sirpfl} problem, Jiao and Ravi~\cite{IRP:WADS} gave the $12$, $24$ and $48$ approximation algorithms for uncapacitated, capacitated splittable and capacitated unsplittable variants respectively using LP-rounding technique.

\subsection{Nondecreasing concave connection costs}
	We propose to study a natural generalization of the \textsc{flp} problem called per-client nondecreasing concave connection costs facility location~(\textsc{ncc-fl}). The set up is identical as for the standard metric \textsc{ufl} problem, except that the connection cost is now defined as a function of distances. More precisely, for each client $j$, we have a nondecreasing concave function $g_j$ which maps distances to connection costs. %The goal is to open a subset of facilities and connect each client to minimize total opening costs and connection costs.{}
	We note the importance of concavity assumption of function $g_j$. Dropping this assumption would allow to encode the set cover problem similarly to the non-metric facility location, rendering the problem hard to approximate better than within $\log n$.

	As we will show, the \textsc{ncc-fl} is tightly related to \textsc{flp}. We will also argue that an algorithm for \textsc{ncc-fl} can be used as a subroutine when solving \textsc{sirpfl} problems. Therefore it serves us a handy abstraction that allows to reduce the \textsc{sirpfl} to the \textsc{flp}. 
	%More broadly, \textsc{ncc-fl} can be also used to solve other facility location problems in which every client has to additionally solve some subproblem that is parametrized by the distance to the closest opened facility.
\begin{table}[h]
		\centering
		\begin{tabular}{r|c|c}
		                        				 & \textbf{previous work} & \textbf{our results} \\ \hline
		\textsc{flp}                      		 & 1.5148~\cite{LiDXX15}  & 1.488       \\ \hline
		\textsc{ncc-fl}             		     & -                      & 1.488       \\ \hline
		uncapacitated \textsc{sirpfl}            & 12~\cite{IRP:WADS}     & 1.488       \\ \hline
		capacitated splittable \textsc{sirpfl}   & 24~\cite{IRP:WADS}     & 3.236       \\ \hline
		capacitated unsplittable \textsc{sirpfl} & 48~\cite{IRP:WADS}     & 6.029       \\
		\end{tabular}
		\vspace{10pt}
		\caption{Summary of improved approximation ratios}
		\label{table:results}
	\end{table}	
\subsection{Our results}
	We give improved approximation algorithms for \textsc{flp} and all three variants of \textsc{sirpfl} (see Table~\ref{table:results}). Our work closes the current gap between classical facility location and \textsc{flp}. %The improvement for \textsc{sirpfl} problems is also substantial. 
	More precisely, our contributions are as follows:
	\begin{enumerate}
		\item We adapt the JMS algorithm to work for the penalized variant of facility location. The technical argument relies on picking a careful order of the clients in the factor revealing program and an adequate reduction to the factor revealing program without penalties.
		\item Then, we combine the adapted JMS algorithm with LP rounding to give the $1.488$-approximation algorithm for \textsc{flp}. Therefore we match the best known approximation algorithm for \textsc{ufl}.
		\item We show a reduction from the \textsc{ncc-fl} to \textsc{flp} which results in a $1.488$-approximation algorithm for \textsc{ncc-fl}.
		\item We cast the \textsc{sirpfl} as the \textsc{ncc-fl} problem, therefore improving approximation factor from $12$ to $1.488$.
		\item For the capacitated versions of \textsc{sirpfl} we are also able to reduce the approximation factors from $24$ (for splittable variant) and $48$ (for unsplittable variant) down to $3.236$ and $6.029$ respectively.
	\end{enumerate}
The results from points 2 and 3 are more technical and follow from already known techniques, we therefore only sketch their proofs in Sections 3 and 4, respectively. The other arguments are discussed in detail.

\section{JMS with penalties}
    Consider Algorithm~\ref{alg:modified-JMS}, a natural analog of the JMS algorithm for penalized version. The only difference to the original JMS algorithm is that we simply freeze the budget $\alpha_j$ of client $j$ whenever it reaches $p_j$. 
%We provide the detailed description in the Algorihm~\ref{alg:modified-JMS}. 
For brevity, we use notation $[x]^+ = \max\{x, 0\}$.

    \begin{algorithm}[h!]
        \caption{Penalized analog of JMS}
        \label{alg:modified-JMS}
        \begin{algorithmic}[1]
            \item Set budget $\alpha_j := 0$ for each client $j$. Declare all clients active and all facilities unopened. At every moment each client $j$ offers some part of its budget to facility~$i$. The amount offered is computed as follows:
            \begin{enumerate}[label=(\roman*)]
                \item if client $j$ is not connected: $[\alpha_j - d_{i,j}]^+$
                \item if client $j$ is connected to some other facility $i'$: $[d_{i',j} - d_{i,j}]^+$
            \end{enumerate}
            \item While there is any active client:
                \begin{itemize}
                    \item simultaneously and uniformly increase the time $t$ and budgets $\alpha_j$ for all active clients until one of the three following events happen:
                    \begin{enumerate}[label=(\roman*)]
                        \item \emph{facility opens}: for some unopened facility $i$, the total amount of offers from all the clients (active and inactive) is equal to the cost of this facility.
                        In this case open facility $i$, (re-)connect to it all the clients with positive offer towards $i$ and declare them inactive.
                        \item \emph{client connects}: for some active client $j$ and opened facility $i$, the budget $\alpha_j = d_{i,j}$. In this case, connect a client $j$ to facility $i$ and deactivate client $j$.
                        \item \emph{potential runs out}: for some active client $j$, its budget $\alpha_j = p_j$. In this case declare $j$ inactive.
                    \end{enumerate}
                \end{itemize}
            \item Return the set of opened facilities. We pay penalties for clients that did not get connected.
        \end{algorithmic}
    \end{algorithm}
    Observe that in the produced solution, clients are connected to the closest opened facility and we pay the penalty if the closest facility is more distant then the penalty.

    Note that Algorithm~\ref{alg:modified-JMS} is exactly the same as the one proposed by Qiu and Kern~\cite{qiu2016factor}. In the next section we give the correct analysis of this algorithm.
\subsection{Analysis: factor-revealing program}
    We begin by introducing additional variables $t_j$ to Algorithm~\ref{alg:modified-JMS}, which does not influence the run of the algorithm, but their values will be crucial for our analysis. Initially set all variables $t_j := 0$. As Algorithm~\ref{alg:modified-JMS} progresses, increase variables $t_j$ simultaneously and uniformly with global time $t$ in the same way as budgets $\alpha_j$. However, whenever \emph{potential runs out} for client $j$, we do not freeze variable $t_j$ (as opposed to $\alpha_j$), but keep increasing it as time $t$ moves on. For such an inactive client $j$, we will freeze $t_j$ at the earliest time $t$ for which there is an opened facility at distance at most $t$ from $j$. For other active clients (i.e. the clients that did not run out of potential) we have $t_j = \alpha_j$.

    Observe now, that the final budget of a client at the end of the algorithm is equal to $\alpha_j = \min\{t_j, p_j\}$. We will now derive a factor revealing program and show that it upper-bounds the approximation factor.
    \begin{theorem}
        \label{thm:program-with-penalites}
        Let $\lambda_f \geq 1$. Let also $\lambda_c = \sup_k{o_k}$, where $o_k$ is the value of the following optimization program $P(k)$.
        \begin{align}
            \max \hspace{5pt} & \frac{\sum_i{\min\{t_i,p_i\}} - \lambda_f f}{\sum_i{d_i}} & (P(k)) \nonumber\\
            \text{s.t.} &                                    \nonumber \\
            & \sum_{i=1}^{l-1} \left[\min\{r_{i,l}, p_i\} - d_i\right]^+   +   \sum_{i = l}^{k} \left[\min\{t_l, p_i\} - d_i\right]^+   \leq f & l \in [k] \label{constraint:opening-start}\\
            & t_i \leq t_{i+1}                      & i \in [k-1] \label{constraint:monotone}\\
            & r_{j,i} \geq r_{j, i+1}               & 1 \leq j < i < k \label{constraint:r-monotone}\\
            & t_i \leq r_{j,i} + d_i + d_j          & 1 \leq j < i \leq k \label{constraint:metric}\\
            & r_{i,l} \leq t_i                        & 1 \leq i < l \leq k \label{constraint:r-vs-t}\\
            & p_i \geq d_i                        & 1 \leq i \leq k \label{constraint:p-vs-d} \\
            & t_i \geq 0, d_i \geq 0, r_{j,i} \geq 0, f \geq 0   & 1 \leq j \leq i \leq k \label{constraint:nonneg1}\\
            & p_i \geq 0                            \label{constraint:nonneg2}
        \end{align}
        Then, for any solution $S$ with facility cost $F_S$, connection cost $D_S$ and penalty cost $P_S$ Algorithm~\ref{alg:modified-JMS} returns the solution of cost at most $\lambda_f F_S + \lambda_c (D_S + P_S)$.
    \end{theorem}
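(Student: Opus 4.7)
The plan is to bound the algorithm's total cost by $\sum_j \min\{t_j, p_j\}$ and then charge this sum against $\lambda_f F_S + \lambda_c (D_S + P_S)$ cluster-by-cluster, mirroring the classical JMS analysis. One first verifies that the algorithm's total cost (facility opening $+$ connection $+$ penalty) equals $\sum_j \alpha_j$ by direct accounting: each client eventually connected to some facility $i$ splits its $\alpha_j$ into $d_{j,i}$ plus a telescoping sum of shares contributed at facility-opening moments, while a penalised client contributes $\alpha_j = p_j$ as the penalty. Since $\alpha_j = \min\{t_j, p_j\}$ by construction, the algorithm's cost equals $\sum_j \min\{t_j, p_j\}$. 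I then partition the clients according to $S$: each opened facility $i^* \in S$ defines a cluster $N_{i^*}$ of the clients assigned to it by $S$, and each client penalised in $S$ is a singleton cluster.

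For a cluster $N = N_{i^*}$ with $|N| = k$ I instantiate the variables of $P(k)$ from the algorithm's execution: order the clients so that $t_1 \leq \ldots \leq t_k$, set $d_i := d_{i, i^*}$, $f := f_{i^*}$, keep the penalties $p_i$, and let $r_{j,i}$ be the distance from $j$ to its closest opened facility at time $t_i$ (taking $+\infty$ if no facility has been opened yet). Constraints (\ref{constraint:monotone}) and (\ref{constraint:r-monotone}) hold by construction; (\ref{constraint:r-vs-t}) follows from the extended definition of $t_j$ preceding the theorem, which certifies an opened facility within distance $t_j$ of $j$ by time $t_j$; (\ref{constraint:metric}) combines the triangle inequality $d_{i, j'} \leq d_i + d_j + r_{j,i}$ (with $j'$ the closest opened facility to $j$ at time $t_i$) and the fact that at time $t_i^-$ no opened facility lies within $t_i$ of $i$; and (\ref{constraint:p-vs-d}) is the WLOG assumption that $S$ never connects $i$ at distance exceeding $p_i$. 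Feasibility of the resulting instance, combined with $\lambda_c \geq o_k$, yields $\sum_{i \in N} \min\{t_i, p_i\} \leq \lambda_f f_{i^*} + \lambda_c \sum_{i \in N} d_{i, i^*}$. For each client $j$ penalised in $S$ we instead instantiate $P(1)$ with $f = 0$ and $d_1 = p_1$; since $o_1 = 1$ this gives $\min\{t_j, p_j\} \leq \lambda_c p_j$. Summing these bounds over all clusters establishes the theorem.

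The main obstacle is verifying the facility-budget constraint (\ref{constraint:opening-start}), whose LHS is exactly the total offer from the clients of $N$ toward $i^*$ at time $t_l^-$. When the algorithm has not yet opened $i^*$ before $t_l$, this is the standard JMS inequality: the total offer from \emph{all} clients is at most $f_{i^*}$, since otherwise $i^*$ would already have opened. When the algorithm has opened $i^*$ before $t_l$, every summand on the left vanishes: a client of $N$ resolved before $t_l$ is either connected to $i^*$ (so $r_{j,l} = d_j$) or, by the reconnection rule, to a facility strictly closer than $i^*$ ($r_{j,l} \leq d_j$), and any $j \in \{l, \ldots, k\}$ must satisfy $\min\{t_l, p_j\} \leq d_j$, for otherwise event (ii) would already have connected $j$ to $i^*$. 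The potential-runs-out case is subsumed by the $\min\{r_{j,l}, p_j\}$ expression together with (\ref{constraint:p-vs-d}), so no separate case analysis is required.
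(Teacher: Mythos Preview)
Your argument is correct and follows essentially the same route as the paper: decompose $S$ into stars, order each star by the auxiliary times $t_i$, instantiate $P(k)$ with the execution data, and verify the constraints (including the key facility-budget inequality~(\ref{constraint:opening-start})). The only cosmetic differences are that you handle the clients penalised in $S$ via a separate instantiation of $P(1)$, whereas the paper absorbs them with $\alpha_j\le p_j$ and a fraction manipulation, and that you explicitly dispose of the sub-case where $i^*$ is already open before $t_l$; one small wrinkle to tighten is that $r_{j,i}$ should be read just \emph{before} $t_i$ (as the paper does), since your verification of constraint~(\ref{constraint:metric}) relies on the state at $t_i^-$.
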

    \begin{proof}
        It is easy to see that Algorithm~\ref{alg:modified-JMS} returns a solution of cost equal to the total budget, i.e., $\sum_i{\alpha_i} = \sum_i{\min\{t_i,p_i\}}$.
        To show a bifactor $(\lambda_f, \lambda_c)$ we fix $\lambda_f$ and consider the $\lambda_f \cdot F_S$ of the budget as being spent on opening facilities and ask how large the resulting $\lambda_c$ can become. Therefore we want to bound $\frac{\sum_i{\alpha_i} - \lambda_f F_S}{D_S + P_S}$.

        Observe that solution $S$ can be decomposed into set $X$ of clients that are left unconnected and a collection $\mathcal{C}$ of stars. Each star $C \in \mathcal{C}$ consist of a single facility and clients connected to this facility (clients for which this facility was closest among opened facilities). Let $d(C)$ and $f(C)$ denote the connection and facility opening cost of the star $C$ respectively. We have to bound the following:
        \begin{align*}
        \frac{\sum_i{\alpha_i} - \lambda_f F_S}{D_S + P_S} &=
        \frac{\sum_{C \in \mathcal{C}}\left(\sum_{j\in C}{\alpha_j} - \lambda_f f(C)\right) + \sum_{j\in X} \alpha_j}{\sum_{C\in \mathcal{C}} d(C) + P_S} \\
        &\leq \frac{\sum_{C \in \mathcal{C}}\left(\sum_{j\in C}{\alpha_j} - \lambda_f f(C)\right) + P_S}{\sum_{C\in \mathcal{C}} d(C) + P_S} \\
        &\leq \max_{C\in \mathcal{C}}\frac{\sum_{j\in C}{\alpha_j} - \lambda_f f(C)}{d(C)}
        \end{align*}
        where the first inequality comes from the fact that $\alpha_j \leq p_j$ for any $j$. The last inequality follows because we can forget about $P_S$ as the numerator is larger than the denominator.

        Therefore we can focus on a single star $C$ of the solution. Let $\mathcal{F}$ be the unique facility of this star and let $f$ be the opening cost of this facility. Let also $1, 2, \dots, k$ be the clients connected in this star to $\mathcal{F}$ and let $d_i$ be the distance between client $i$ and facility $\mathcal{F}$. We also assume that these clients are arranged in the nondecreasing order with respect to $t_i$. This is the crucial difference with the invalid analysis in~\cite{qiu2016factor}.

        For each $j<i$ we define $r_{j, i}$ as the distance of client $j$ to the closest opened facility in a moment just before $t_i$. The constraint (\ref{constraint:r-monotone}) is valid, because when time increases we may only open new facilities.

        To understand constraint (\ref{constraint:metric}), i.e., $t_i \leq r_{j,i} + d_i + d_j$ for $1\leq j < i \leq k$, consider the moment $t = t_i$. Let $\mathcal{F}'$ be the opened facility at distance $r_{j,i}$ from $j$. By the triangle inequality, the distance from $i$ to $\mathcal{F}'$ is at most $r_{j,i} + d_i + d_j$. The inequality follows from the way we defined $t_i$ (see Figure~\ref{fig:order-t}).

        \begin{figure}[t]
            \centering
            \includegraphics[width=.5\linewidth]{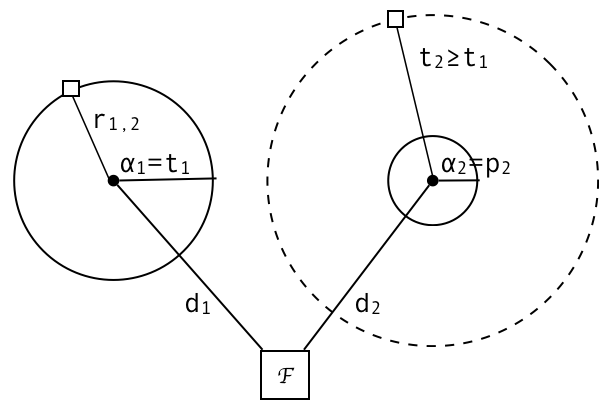}
            \caption{The ordering of the clients with respect to $t_i$ instead of $\alpha_i$.}
            \label{fig:order-t}
        \end{figure}

        Constraint (\ref{constraint:r-vs-t}), i.e. $r_{i,l} \leq t_i$ for $i < l$ follows, as client $i$ cannot be connected to a facility of distance larger than $t_i$. Moreover constraint (\ref{constraint:p-vs-d}) is valid, as otherwise solution $S'$ which does not connect client $i$ to facility $\mathcal{F}$ would be cheaper.
        
        Now we are left with justifying the opening cost constraints (\ref{constraint:opening-start}). Fix $l \in [k]$ and consider the moment just before $t_l$. We will count the contribution of each client towards opening $\mathcal{F}$. From Algorithm~\ref{alg:modified-JMS} the total contribution cannot exceed $f$. First, consider $i < l$. It is easy to see that if $i$ was already connected to some facility, then its offer is equal to $[r_{i,l} - d_i]^+$. Otherwise, as $t_i\leq t_l$, we know that $i$ already exhausted its potential, hence its offer is equal to $[p_i - d_i]^+$. Consider now $i \geq l$. From the description of Algorithm~\ref{alg:modified-JMS} and definition of $t_i$ its budget at this point is equal to $\min\{t_i, p_i\} \geq \min\{t_l, p_i\}$.

%        We justified all the constraints of the program $P(k)$. This finishes the proof.
    \qed\end{proof}
    Observe that $P(k)$ resembles the factor revealing program used in the analysis of the JMS algorithm for version without penalties~\cite{Jain:2003:GFL:950620.950621}. However $P(k)$ has additional variables $p$ and minimas.
    Consider now the following program $\hat{P}(k, m)$
    \begin{align}
        \max \hspace{5pt} & \frac{\sum_i{m_i t_i} - \lambda_f f}{\sum_i{m_i d_i}} & (\hat{P}(k, m)) \nonumber\\
        \text{s.t.} &                                    \nonumber \\
        & \sum_{i=1}^{l-1} m_i \left[r_{i,l} - d_i\right]^+   +   \sum_{i = l}^{k} m_i \left[t_l - d_i\right]^+   \leq f & l \in [k] \label{constraint:opening-final}\\
        & (\ref{constraint:monotone}), (\ref{constraint:r-monotone}), (\ref{constraint:metric}), (\ref{constraint:nonneg1}) \nonumber
    \end{align} where $m$ is the vector of $m_i$'s.
    We claim that by losing arbitrarily small $\epsilon$, we can bound the value of $P(k)$ by the $\hat{P}(k,m)$ for some vector $m$ of natural numbers. This is captured by the following theorem, where $v(\sol, P)$ denotes the value of the solution $\sol$ to program $P$.
    \begin{theorem}
        \label{thm:reduction-to-jms}
        For any $\epsilon > 0$ and any feasible solution $\sol$ of value $v(\sol, P(k))$ to the program $P(k)$, there exists a vector $m$ of natural numbers and a feasible solution $\overline{\sol}_{k,m}$ to program $\hat{P}(k,m)$ such that $v(\sol, P(k)) \leq v(\overline{\sol}_{k,m}, \hat{P}(k,m)) + \epsilon$.
    \end{theorem}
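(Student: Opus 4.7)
My plan is to construct, from a feasible solution $\sol = (t, d, r, p, f)$ of $P(k)$, a feasible solution $\overline{\sol}_{k,m}$ of $\hat{P}(k,m)$ whose objective matches $v(\sol, P(k))$ up to $\epsilon$, effectively absorbing the penalty variables $p_i$ into modifications of the other variables.

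First, I define the effective time $\bar t_i := \min\{t_i, p_i\}$ for each client; this is exactly what client $i$ contributes to the numerator of $P(k)$. Because this modification may disturb the monotonicity constraint (\ref{constraint:monotone}), I re-sort the clients by $\bar t$ via a permutation $\sigma$ and work in the sorted indexing, setting $\bar d_i := d_{\sigma(i)}$ and $\bar f := f$. Since $\hat{P}$ drops constraint (\ref{constraint:r-vs-t}), there is extra flexibility for the $\bar r$ variables. A natural choice is $\bar r_{j,l} := \min\{r_{\sigma(j),\sigma(l)}, p_{\sigma(j)}\}$ when $\sigma(j) < \sigma(l)$ in the original order (with a suitable analog otherwise), designed to be as small as possible so as to minimize the offers in the opening constraint, while still satisfying (\ref{constraint:r-monotone}) and (\ref{constraint:metric}).

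With $m_i := 1$ throughout, the numerator of $\hat{P}(k,m)$ equals $\sum_i \bar t_i - \lambda_f \bar f = \sum_i \min\{t_i, p_i\} - \lambda_f f$ and the denominator equals $\sum_i d_i$, so the objective matches $v(\sol, P(k))$ exactly. If any step of the construction requires rational rather than integer multiplicities, I scale all $m_i$ (together with $\bar f$) by a large common integer $N$, which preserves the ratio of numerator to denominator; rounding to integers then introduces at most $\epsilon$ additive error for sufficiently large $N$. This is where the $\epsilon$ slack in the statement enters.

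The main obstacle is verifying the opening-cost constraint (\ref{constraint:opening-final}) of $\hat{P}$. Without the $\min\{\cdot, p_i\}$ caps, the offers of penalty-capped clients in $\hat{P}$ risk exceeding those in $P(k)$. I would argue via a case analysis, based on whether each client is penalty-capped and on how its original index relates to its index after re-sorting, that each $\hat{P}$ term $m_i[\bar r_{i,l} - \bar d_i]^+$ (for $i < l$) and $m_i[\bar t_l - \bar d_i]^+$ (for $i \ge l$) is dominated term-by-term by the corresponding quantity appearing in (\ref{constraint:opening-start}) for the original solution at level $\sigma(l)$. Constraint (\ref{constraint:p-vs-d}), $p_i \ge d_i$, together with the ordering $\bar t_i \le \bar t_{i+1}$, is critical in ruling out the corner cases where the sorting permutation would otherwise reverse the desired inequality between the new and old offer expressions.
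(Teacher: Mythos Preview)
Your approach has a genuine gap: setting $\bar t_i := \min\{t_i,p_i\}$, re-sorting, and taking all $m_i=1$ cannot in general produce a feasible solution of $\hat P(k,m)$ with the same objective. The term-by-term domination you hope for in the opening constraint fails, because once the penalty caps are removed, the metric constraint~(\ref{constraint:metric}) of $\hat P$ forces some $\bar r$ values to be large, and these large values then violate~(\ref{constraint:opening-final}).

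A concrete instance: $k=3$, $t=(1,2,3)$, $p=(10,10,0.5)$, $d=(1.5,0.5,0.5)$, $r_{1,2}=r_{1,3}=1$, $r_{2,3}=2$, $f=1.5$. This is feasible for $P(3)$ (all three instances of~(\ref{constraint:opening-start}) hold with equality $1.5$ or slack). Now $\bar t=(1,2,0.5)$, so your re-sort places old client~$3$ first and the new data are $\bar t=(0.5,1,2)$, $\bar d=(0.5,1.5,0.5)$. In $\hat P$, constraint~(\ref{constraint:metric}) at $(j,i)=(1,3)$ reads $2 \le \bar r_{1,3}+0.5+0.5$, so $\bar r_{1,3}\ge 1$. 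But then~(\ref{constraint:opening-final}) at $l=3$ gives
\[
[\bar r_{1,3}-0.5]^+ + [\bar r_{2,3}-1.5]^+ + [2-0.5]^+ \;\ge\; 0.5 + 0 + 1.5 \;=\; 2 \;>\; 1.5 = \bar f,
\]
so no choice of $\bar r$ makes your solution feasible. The difficulty is exactly that old client~$3$ used its tiny penalty $p_3=0.5$ to contribute $0$ to every opening constraint in $P(k)$, but after you strip the penalty and move it to the front, the triangle inequality against new client~$3$ (old client~$2$, with $\bar t=2$) forces it to contribute positively. The vestigial role of $\epsilon$ in your write-up (your construction never actually produces non-integer $m_i$) is a symptom that the multiplicities are not doing any real work.

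The paper's argument is structurally different and explains why non-trivial multiplicities are essential. It keeps the original ordering by $t_i$ and introduces, for each client, a scalar $z_i = \frac{\min\{t_i,p_i\}-d_i}{t_i-d_i}\in[0,1]$. The key pointwise inequality $[\min\{x,p_i\}-d_i]^+ \ge z_i\,[x-d_i]^+$ for all $x\le t_i$ (applied with $x=r_{i,l}$ and $x=t_l$) shows that replacing each capped offer by a $z_i$-weighted uncapped offer only relaxes~(\ref{constraint:opening-start}), while the numerator becomes $\sum_i (d_i + z_i(t_i-d_i)) = \sum_i \min\{t_i,p_i\}$, preserving the objective exactly. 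Only then are the $z_i$ discretized to rationals with a common denominator $M$ (this is where the $\epsilon$ is spent), and $m_i := M z_i'$ become the natural-number multiplicities. In short, the penalties are absorbed into the \emph{weights} $m_i$, not into the \emph{times} $\bar t_i$; your attempt to do the latter is what breaks.
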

%    \begin{proof}
  %      See Section~\ref{sec:proof-reduction-to-jms}.
   % \qed\end{proof}
Before we prove the above theorem, we show that it implies the desired bound on the approximation ratio. Note that the program $\hat{P}(k,m)$ is similar to the factor revealing program in the statement of Theorem 6.1 in~\cite{Jain:2003:GFL:950620.950621} where their $k=\sum_{i=1}^k m_i$. The only difference is that it has additional constraints imposing that some clients are the same (we have $m_i$ copies of each client). However, this cannot increase the value of the program. Therefore, we obtain the same bi-factor approximation as the JMS algorithm~\cite{Jain:2003:GFL:950620.950621}.
    \begin{corollary}
    \label{corollary:bifactor}
        Algorithm~\ref{alg:modified-JMS} is a $(1.11,1.78)$-approximation algorithm\footnote{see e.g., Lemma~2 in~\cite{MahdianYZ02} for a proof of these concrete values of bi-factor approximation.} for the \textsc{ufl} problem with penalties, i.e., it produces solutions whose cost can be bounded by $1.11$ times the optimal facility opening cost plus $1.78$ times the sum of the optimal connection cost and penalties. 
    \end{corollary}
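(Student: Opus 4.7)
The plan is to chain together the three ingredients already assembled in the excerpt: Theorem~\ref{thm:program-with-penalites} (the penalized factor-revealing program $P(k)$ controls the algorithm's cost), Theorem~\ref{thm:reduction-to-jms} (any solution of $P(k)$ is, up to $\epsilon$, matched by a solution of $\hat{P}(k,m)$), and the classical JMS factor-revealing bound. First I would fix $\lambda_f = 1.11$ and apply Theorem~\ref{thm:program-with-penalites}, which reduces the task to showing $\lambda_c := \sup_k o_k \leq 1.78$, where $o_k$ denotes the optimal value of $P(k)$.

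Next I would use Theorem~\ref{thm:reduction-to-jms} to replace $P(k)$ by $\hat{P}(k,m)$. Concretely, for any feasible $\sol$ to $P(k)$ and any $\epsilon > 0$, there is an $m \in \mathbb{N}^k$ and a feasible $\overline{\sol}_{k,m}$ to $\hat{P}(k,m)$ with $v(\sol, P(k)) \leq v(\overline{\sol}_{k,m}, \hat{P}(k,m)) + \epsilon$. Taking suprema and letting $\epsilon \to 0$ yields $\sup_k o_k \leq \sup_{k,m} \hat{o}(k,m)$, where $\hat{o}(k,m)$ is the optimal value of $\hat{P}(k,m)$.

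Then I would observe that $\hat{P}(k,m)$ is precisely the JMS factor-revealing program of~\cite{Jain:2003:GFL:950620.950621} on $K := \sum_{i=1}^k m_i$ clients, with the extra restriction that clients are grouped into blocks of sizes $m_1, \dots, m_k$ whose $t$-, $d$-, and $r$-variables coincide within each block. Since this grouping only restricts the feasible region, $\hat{o}(k,m) \leq o_K^{\mathrm{JMS}}$, and so $\sup_{k,m} \hat{o}(k,m) \leq \sup_K o_K^{\mathrm{JMS}}$. Invoking the known evaluation of the JMS factor-revealing LP (Lemma~2 of~\cite{MahdianYZ02}), one has $\sup_K o_K^{\mathrm{JMS}} \leq 1.78$ at $\lambda_f = 1.11$, so $\lambda_c \leq 1.78$, which together with Theorem~\ref{thm:program-with-penalites} yields the claimed $(1.11, 1.78)$-bifactor guarantee.

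The main obstacle is essentially bookkeeping rather than mathematics: verifying carefully that the block-structured solutions of $\hat{P}(k,m)$ sit inside the JMS feasible region, i.e.\ that the opening-cost constraint (\ref{constraint:opening-final}) with multiplicities $m_i$, the monotonicity (\ref{constraint:monotone}),(\ref{constraint:r-monotone}), and the metric constraint (\ref{constraint:metric}) are exactly what one gets from the JMS constraints after identifying variables within each block. Once this identification is made explicit, the reduction to the JMS bound is immediate.
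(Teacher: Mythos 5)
Your proof matches the paper's argument essentially verbatim: apply Theorem~\ref{thm:program-with-penalites} to reduce to bounding $\sup_k o_k$, use Theorem~\ref{thm:reduction-to-jms} to pass to $\hat{P}(k,m)$ up to~$\epsilon$, observe that $\hat{P}(k,m)$ is the JMS factor-revealing program on $\sum_i m_i$ clients with extra equality restrictions that can only shrink the feasible region, and invoke the known JMS evaluation $(1.11,1.78)$. The approach and the chain of inequalities are the same as in the text preceding the corollary.
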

We are left with the proof of Theorem~\ref{thm:reduction-to-jms} which we give in the subsection below.
\subsection{Reducing the factor revealing programs}
\label{sec:proof-reduction-to-jms}
    \begin{proof}[Proof of Theorem~\ref{thm:reduction-to-jms}]
        First, we give the overview of the proof and the intuition behind it. We have two main steps:
        \begin{itemize}
            \item \textbf{Step 1 --- Getting rid of $p$ and minimas} \\
                We would like to get a rid of the variables $p_i$ and the minimas. To achieve this, we replace them with appropriate ratios $z_i$.
            \item \textbf{Step 2 --- Discretization} \\
                We then make multiple copies of each client. For some of them, we assign penalty equal to its $t_i$ and for others --- $d_i$. The portion of copies with positive penalty is equal to $z_i$.
        \end{itemize}
        
        For each step, we construct optimization programs and corresponding feasible solutions. The goal is to show, that in the resulting chain of feasible solutions and programs, the value of each solution can be upper-bounded by the value of the next solution.
        Formally, take any feasible solution $\sol = (t^*,d^*,r^*,p^*,f^*)$ to the program $P(k)$. 
        In the following, we will construct solutions $\sol_1, \sol_2$ and programs $P_1$ and $P_2$ such that
        $v(\sol, P(k)) \leq v(\sol_1, P_1) \leq v(\sol_2, P_2) + \epsilon \leq v(\overline\sol_{k,m}, \hat{P}(k,m)) + \epsilon$

        \textbf{Step 1 --- getting rid of $p$ and minimas}

        Define $z^*_i = \frac{\min\{t^*_i, p^*_i\} - d^*_i}{t^*_i - d^*_i}$ if $t^*_i - d^*_i >0$, and $z^*_i = 0$ otherwise. Observe that $z^*_i \in [0, 1]$ as $p^*_i \geq d^*_i$ by constraint~(\ref{constraint:p-vs-d}).
        We claim that $\sol_1 = (t^*,d^*,r^*,z^*,f^*)$ is a feasible solution to the following program $P_1(k)$
        \begin{align}
            \max \hspace{5pt} & \frac{\sum_i{d_i + z_i (t_i - d_i)} - \lambda_f f}{\sum_i{d_i}} & (P_1(k)) \nonumber\\
            \text{s.t.} &                                    \nonumber \\
            & \sum_{i=1}^{l-1} z_i \left[r_{i,l} - d_i\right]^+   +   \sum_{i=l}^{k} z_i \left[t_l - d_i\right]^+   \leq f & l \in [k] \label{constraint:opening2}\\
            & (\ref{constraint:monotone}), (\ref{constraint:r-monotone}), (\ref{constraint:metric}), (\ref{constraint:r-vs-t}), (\ref{constraint:nonneg1}) \nonumber \\
            & 0 \leq z_i \leq 1                            \label{constraint:nonneg-z}
        \end{align}
        and that the value of $\sol$ in $P(k)$ is the same as the value of $\sol_1$ in $P_1(k)$.
        The latter property follows trivially as $\sum_i{d^*_i + z^*_i (t^*_i - d^*_i)} = \sum_i{\min\{t^*_i,p^*_i\}}$.
        To show feasibility, we have to argue that (\ref{constraint:opening2}) is a valid constraint for $\sol_1$. To this end we will use the following claim.
        \begin{claim}
        \label{claim:minimum-and-z}
            For any $x\leq t^*_i$, we have that
            $$\left[\min\{x, p^*_i\} - d^*_i\right]^+ \geq z^*_i \left[x - d^*_i\right]^+ $$
        \end{claim}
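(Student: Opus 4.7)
The plan is to prove Claim~\ref{claim:minimum-and-z} by a short case analysis based on the position of $x$ relative to $d^*_i$ and $p^*_i$, and on the sign of $t^*_i - d^*_i$. Recall that $z^*_i \in [0,1]$: this follows either by definition (when $t^*_i \le d^*_i$) or, when $t^*_i > d^*_i$, from the fact that constraint~(\ref{constraint:p-vs-d}) gives $p^*_i \ge d^*_i$, so the numerator of $z^*_i$ lies between $0$ and $t^*_i-d^*_i$.

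First I would dispose of the degenerate cases where the right-hand side is zero. If $t^*_i \leq d^*_i$, then $z^*_i = 0$ by definition and there is nothing to prove. If $x \leq d^*_i$, then $[x - d^*_i]^+ = 0$ and again the right-hand side vanishes while the left-hand side is always nonnegative. So I may assume $d^*_i < x \leq t^*_i$, in which case $p^*_i \ge d^*_i$ implies $\min\{x,p^*_i\} \geq d^*_i$, so both brackets can be dropped: the claim reduces to
\[
\min\{x, p^*_i\} - d^*_i \;\geq\; z^*_i\,(x - d^*_i).
\]

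Now I would split on the relation between $x$ and $p^*_i$. If $x \leq p^*_i$, the left-hand side equals $x - d^*_i$, and since $z^*_i \leq 1$ the inequality is immediate. If $x > p^*_i$, then in particular $p^*_i < x \leq t^*_i$, so $\min\{t^*_i, p^*_i\} = p^*_i$ and $z^*_i = (p^*_i - d^*_i)/(t^*_i - d^*_i)$. The inequality to show becomes
\[
p^*_i - d^*_i \;\geq\; \frac{p^*_i - d^*_i}{t^*_i - d^*_i}\,(x - d^*_i),
\]
which (after dividing by $p^*_i - d^*_i \geq 0$, the degenerate case $p^*_i = d^*_i$ being trivial) is precisely the hypothesis $x \leq t^*_i$.

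There is really no significant obstacle here; the only subtlety is making sure that the piecewise definition of $z^*_i$ and the bracket functions $[\cdot]^+$ are handled cleanly, which the case split above accomplishes. The only nontrivially used facts are the constraint $p^*_i \geq d^*_i$ (to remove the outer bracket on the left-hand side) and the assumption $x \leq t^*_i$ (which is what ultimately powers the nondegenerate case).
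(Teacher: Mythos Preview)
Your proof is correct and follows essentially the same approach as the paper: a case split on whether $x \le p^*_i$ or $x > p^*_i$, using $z^*_i \le 1$ in the first case and the hypothesis $x \le t^*_i$ in the second. If anything, you are slightly more careful than the paper in explicitly disposing of the degenerate cases $t^*_i \le d^*_i$ and $x \le d^*_i$ before dropping the $[\cdot]^+$ brackets.
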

        \begin{proof}
            Consider two cases:
            \begin{enumerate}
                \item $p^*_i \leq x$. In this case the left hand side is equal to $\left[p^*_i - d^*_i\right]^+$, while the right hand side is equal to $\frac{p^*_i-d^*_i}{t_i-d_i} \cdot \left[x - d^*_i\right]^+$. As $x\leq t^*_i$, the claim follows.
                \item $p^*_i > x$. In this case the left hand side is equal to $\left[x - d^*_i\right]^+$, while the right hand side is equal to $z^*_i \cdot \left[x - d^*_i\right]^+$. As $z^*_i \leq 1$, the claim follows.
            \end{enumerate}
        \qed\end{proof}
        Claim~\ref{claim:minimum-and-z} together with the fact that $r^*_{i,l} \leq t^*_i$ for $i<l$ (constraint (\ref{constraint:r-vs-t})) and $t^*_l\leq t^*_i$ for $l\leq i$ (constraint (\ref{constraint:monotone})) implies that
        $$
        \sum_{i=1}^{l-1} \left[\min\{r^*_{i,l}, p^*_i\} - d^*_i\right]^+   +   \sum_{i = l}^{k} \left[\min\{t^*_l, p^*_i\} - d^*_i\right]^+   \geq
        \sum_{i=1}^{l-1} z^*_i \left[r^*_{i,l} - d^*_i\right]^+   +   \sum_{i=l}^{k} z^*_i \left[t^*_l - d^*_i\right]^+$$
        which shows feasibility of $\sol_1$.

        \textbf{Step 2 --- discretization}

        Take $M = N \cdot \lceil \max\limits_{i:z^*_i > 0}{\frac{1}{z^*_i}} \rceil$, where $N = \lceil\frac{\lambda_f f^*}{\epsilon}\rceil$. Note that the value of $M$ depends on $\sol_1$.

        Define now program $P_2(k, M)$, by adding to the program $P_1(k)$ the constraints $z_i \in \{0, \frac{1}{M}, \frac{2}{M}, \dots, \frac{M}{M}\}$ for each $i$. We construct a solution to this program in the following way. Take $z'_i = \frac{\lceil z^*_i \cdot M \rceil}{M}$ and $f' = \frac{N+1}{N}f^*$.
        Let now $\sol_2 = (t^*,d^*,r^*,z',f')$.

        First, we claim that $\sol_2$ is feasible to program $P_2(k,M)$.
        To see this, observe that $z^*_i \leq z'_i \leq \frac{N+1}{N} z^*_i$ and $z' \in \{0, \frac{1}{M}, \frac{2}{M}, \dots, \frac{M}{M}\}$. The feasibility follows from multiplying both sides of the constraint~(\ref{constraint:opening2}) by $\frac{N+1}{N}$.

        Second, we claim that $v(\sol_1, P_1(k)) \leq v(\sol_2, P_2(k,M)) + \epsilon$. We have the following:
        \begin{align*}
            v(\sol_1, P_1(k)) &= \frac{\sum_i{d^*_i + z^*_i (t^*_i - d^*_i)} - \lambda_f f^*}{\sum_i{d^*_i}} \\
            &\leq \frac{\sum_i{d^*_i + z'_i (t^*_i - d^*_i)} - \frac{N+1}{N}\lambda_f f^* + \frac{1}{N}\lambda_f f^*}{\sum_i{d^*_i}} \\
            &\leq \frac{\sum_i{d^*_i + z'_i (t^*_i - d^*_i)} - \lambda_f f' + \epsilon}{\sum_i{d^*_i}} \\
            &\leq v(\sol_2, P_2(k,M)) + \epsilon
        \end{align*}
        where in the last line we use the fact that for the normalization, the denominator $\sum_i{d_i}$ can be fixed to be equal 1.

        \textbf{Finishing the proof}
        
        Define now $m_i = M \cdot z'_i$ and consider program $\hat{P}(k,m)$. Note, that all the $m_i$ variables are natural numbers as required.
        It remains to construct the solution $\overline\sol_{k,m}$ for $\hat{P}(k,m)$.
        Let $\hat{f} = M \cdot f'$ and $\overline\sol_{k,m} = (t^*,d^*,r^*,\hat{f})$.
        To see that the constraint~(\ref{constraint:opening-final}) is satisfied, multiply by $M$ both sides of valid constraint~(\ref{constraint:opening2}) for $\sol_2$
        (i.e. $\sum_{i=1}^{l-1} z'_i \left[r^*_{i,l} - d^*_i\right]^+   +   \sum_{i=l}^{k} z'_i \left[t^*_l - d^*_i\right]^+   \leq f'$).
        
        It remains to bound the value of $\sol_2$ with the value of $\overline\sol_{k,m}$:
        \begin{align*}
        v(\sol_2, P_2(k,m)) &= \frac{\sum_i{d^*_i + z'_i (t^*_i - d^*_i)} - \lambda_f f'}{\sum_i{d^*_i}} \\
        &= \frac{\sum_i{d^*_i M + z'_i M (t^*_i - d^*_i)} - \lambda_f Mf'}{\sum_i{Md^*_i}} \\
        & = \frac{\sum_i{d^*_i M + m_i (t^*_i - d^*_i)} - \lambda_f Mf'}{\sum_i{Md^*_i}} \\
        & = \frac{\sum_i{m_i t^*_i} - \lambda_f Mf' + \sum_i{(M-m_i)d^*_i}}{\sum_i{m_i d^*_i} + \sum_i{(M-m_i) d^*_i}} \\
        & \leq \frac{\sum_i{m_i t^*_i} - \lambda_f Mf'}{\sum_i{m_i d^*_i}} \\
        &= v(\overline\sol_{k,m}, \hat{P}(k,m))
        \end{align*}
        where the last inequality follows from the fact that the nominator is larger than the denominator (as this fraction gives an upper bound on approximation factor which must be greater than 1).
    \qed\end{proof}

\section{Combining algorithms for FLP}
    %Show how to combine JMS with LP rounding.

By Corollary~\ref{corollary:bifactor}, the adapted JMS algorithm is a $(1.11,1.78)$-approximation algorithm for the \textsc{ufl} problem with penalties.

It remains to note that the applicability of the LP-rounding algorithms for \textsc{ufl} to the \textsc{flp} problem has already been studied. In particular the algorithm 4.2 of~\cite{LiDXX15} is an adaptation of the LP rounding algorithms for \textsc{ufl} by Byrka and Aardal~\cite{byrka2010optimal} and Li~\cite{li20111} to the setting with penalties.

Note also that Qiu and Kern~\cite{qiu2016factor} made an attempt on finalising the work on \textsc{ufl} with penalties and analysing the adapted JMS algorithm, and correctly argued that once the analogue of the JMS algorithm for the penalty version of the problem is known the improved approximation ratio for the variant with penalties will follow. Their analysis of the adapted JMS was incorrect (and the paper withdrawn from arxiv). By providing the missing analysis of the adapted JMS, we fill in the gap and obtain:

\begin{corollary}
 There exists a $1.488$-approximation algorithm for the \textsc{flp}.
\end{corollary}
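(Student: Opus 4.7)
The plan is to replicate Li's $1.488$-approximation for \textsc{ufl} in the penalty setting by combining two bifactor algorithms. The first ingredient is already in hand: by Corollary~\ref{corollary:bifactor}, the penalized JMS of Algorithm~\ref{alg:modified-JMS} is a bifactor $(1.11, 1.78)$-approximation for \textsc{flp}, charging opening cost by $1.11$ and connection-plus-penalty cost by $1.78$ against any feasible decomposition $(F^\ast, D^\ast, P^\ast)$. The second ingredient is a parameterized LP-rounding bifactor: solve the natural LP relaxation of \textsc{flp} augmented with a penalty-indicator variable per client, scale all opening variables by a parameter $\gamma > 1$, and apply the filtering-plus-clustering rounding of Byrka and Aardal together with Li's randomized choice of $\gamma$. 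The resulting guarantee is a bifactor of the form $(\gamma, 1 + 2e^{-\gamma})$ with $D^\ast + P^\ast$ treated as a single quantity. I would not redo this analysis, since Algorithm 4.2 of Li, Du, Xu, Xu~\cite{LiDXX15} already adapts the Byrka--Aardal/Li rounding to penalties, so the bifactor can be quoted off the shelf.

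With both bifactor algorithms available, the remaining step is the standard convex combination argument of~\cite{byrka2010optimal,li20111}. Run both algorithms, return the cheaper solution, and optimize over the mixing parameter and over $\gamma$. Since both bifactors charge facility opening and connection-plus-penalty costs with the same structural constants as in the classical \textsc{ufl} analysis, the numerical trade-off between $(1.11, 1.78)$ and $(\gamma, 1 + 2e^{-\gamma})$ is identical to the one Li solved for \textsc{ufl}, so at the optimal setting the worst-case ratio collapses to the same value $1.488$.

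The main obstacle I would anticipate is confirming that the LP-rounding bifactor really transfers from \textsc{ufl} to \textsc{flp} with no loss. Conceptually each penalty $p_j$ behaves like a direct connection to a dedicated zero-cost dummy facility at distance $p_j$, but one must check that the filtering step, the clustering step, and the subsequent randomized facility opening all respect this encoding and still yield the stated expected bound on $D^\ast + P^\ast$. As~\cite{LiDXX15} has already carried out exactly this verification, the rest of the argument is a routine invocation of their rounding together with the penalized JMS of the previous section, combined via the standard bifactor averaging.
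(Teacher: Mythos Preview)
Your proposal is correct and matches the paper's own argument: invoke Corollary~\ref{corollary:bifactor} for the $(1.11,1.78)$ bifactor of the penalized JMS, quote the LP-rounding adaptation from Algorithm~4.2 of~\cite{LiDXX15}, and combine the two exactly as in Li's analysis for \textsc{ufl} to obtain $1.488$. The paper sketches this in the same way and at roughly the same level of detail, additionally noting that Qiu and Kern~\cite{qiu2016factor} had already observed that this combination goes through once the JMS analysis is in place.
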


\begin{corollary}
 For any $\lambda_f \geq 1.6774$ and $\lambda_{c+p} = 1+ \frac{2}{e^{\lambda_f}}$ there exists a bifactor $(\lambda_f, \lambda_{c+p})$-approximation algorithm for \textsc{flp}.
\end{corollary}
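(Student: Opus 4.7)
The target bi-factor $(\lambda_f,\,1+2e^{-\lambda_f})$ for $\lambda_f\geq 1.6774$ is exactly the guarantee produced by the Byrka--Aardal scale-and-round LP algorithm for plain \textsc{ufl}, so the plan is to lift that algorithm to \textsc{flp} in the manner of Algorithm~4.2 of~\cite{LiDXX15} and verify that the same bound now holds against the combined cost $D^*+P^*$. Since $\lambda_f\geq 1.6774$ lies in the regime where Byrka--Aardal alone dominates (no convex combination with \textsc{jms} is required to beat the connection-cost factor $1+2e^{-\lambda_f}$), a single scale-and-round suffices.

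First, I would write the natural LP relaxation of \textsc{flp} with facility variables $y_i$, assignment variables $x_{ij}$, and penalty indicators $z_j$ satisfying $z_j+\sum_i x_{ij}=1$ for each client $j$, and let $(x^*,y^*,z^*)$ be an optimal fractional solution of cost $F^*+D^*+P^*$. Next, I would run the preprocessing of Byrka--Aardal with scaling parameter $\gamma=\lambda_f$: multiply $y^*$ by $\gamma$ and cap at $1$, redirecting each $x^*_{ij}$ to respect the capped openings, while preserving the penalty mass $z^*_j$. Then perform the standard filtering into a ``close'' set (carrying mass at least $1-\gamma e^{-\gamma}-z^*_j$) and a ``far'' set, cluster-center selection, and independent randomized facility opening.

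For the analysis, I would fix a client $j$ and split its expected cost into three parts, exactly as in Byrka--Aardal. With probability at least $1-\gamma e^{-\gamma}$ the close set of $j$ contains an opened facility, so the connection cost is bounded by the average close-set distance. With the residual probability $\gamma e^{-\gamma}$, $j$ routes through its cluster center to some opened facility, contributing the usual term proportional to the maximum distance. The penalty $p_j$ is paid only when even this fallback fails, and the constraint $z^*_j p_j\leq P^*_j$ lets us charge it cleanly against $P^*_j$. Summing over $j$ and combining the three terms yields expected total cost at most $\gamma F^*+(1+2e^{-\gamma})(D^*+P^*)$.

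The main obstacle is the bookkeeping around $z^*_j>0$: one has to check that the close-set coverage guarantee still holds after the penalty mass has been carved out, and that $p_j$ correctly substitutes for the ``worst unopened facility'' distance in the Byrka--Aardal expectation, without double-charging the penalty and the far-set connection. This is precisely the content of the adaptation carried out in Algorithm~4.2 of~\cite{LiDXX15}; invoking their analysis in place of the UFL-only Byrka--Aardal bound directly produces the stated bi-factor and finishes the proof.
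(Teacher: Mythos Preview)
Your proposal is correct and takes essentially the same approach as the paper: both observe that the bifactor $(\lambda_f,\,1+2e^{-\lambda_f})$ is precisely the Byrka--Aardal LP-rounding guarantee, and both defer to Algorithm~4.2 of~\cite{LiDXX15} for the adaptation of that rounding to the penalty setting. The paper in fact states this corollary without a separate proof, treating it as an immediate consequence of the discussion citing~\cite{LiDXX15}; your sketch simply unpacks what that citation contains.
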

\section{Solving NCC-FL with algorithms for FLP}
    %First, we define the Facility Location with per-client Monotone Concave Connection Costs (\textsc{ncc-fl}) problem. In this setting we have a standard metric uncapacitated facility location instance with clients $D$, facilities $F$ and distances $d_{j,i}$ satisfying triangle inequality. However, now for every client $j \in D$ we have an increasing concave function $g_j$ which maps distances to costs. Therefore the connection cost of client $j$ to facility $i$ is equal to $g_j(d_{j,i})$.

We will now discuss how to use algorithms for the  \textsc{flp} problem to solve \textsc{ncc-fl} problem. 
To this end we introduce yet another variant of the problem: facility location with penalties and multiplicities \textsc{flpm}. In this setting each client $j$ has two additional parameters: penalty $p_j$ and multiplicity $m_j$, both being nonnegative real numbers. If client $j$ is served by facility $i$ the service cost is $m_j \cdot d_{ij}$ and if it is not served by any facility the penalty cost is $m_j \cdot p_j$. 

%The main idea in the reduction it to create multiple copies of each client with the different penalties in such a way that the additional cost of applying a function $g$ (i.e. $g_j(d_{i,j}) - d_{i, j}$) is paid by penalties.
\begin{lemma}
\label{lem:ncc-fl}
    There is an approximation preserving reduction from  \textsc{ncc-fl} to \textsc{flpm}.
\end{lemma}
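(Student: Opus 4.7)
The plan is to use the following representation: every nondecreasing concave function $g \colon [0,\infty)\to[0,\infty)$ with $g(0)=0$ that is piecewise linear, with breakpoints $0<q_1<\cdots<q_K$ and slopes $s_0>s_1>\cdots>s_K\ge 0$ on the successive intervals, satisfies
\[
 g(d)=\sum_{k=1}^{K}(s_{k-1}-s_k)\min(d,q_k) + s_K\cdot d.
\]
Each cap summand $(s_{k-1}-s_k)\min(d,q_k)$ is precisely the cost incurred at distance $d$ by a single \textsc{flpm} client with multiplicity $s_{k-1}-s_k$ and penalty $q_k$. The linear tail $s_K\cdot d$ is the cost of an \textsc{flpm} client with multiplicity $s_K$ and penalty chosen larger than the diameter of the instance, so that connecting is always preferred to taking the penalty.

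First I would reduce a general \textsc{ncc-fl} instance to an equivalent one in which each $g_j$ is piecewise linear. Since $g_j$ is only ever evaluated at the finite set $D_j=\{d_{ij}:i\in F\}$, I would replace it by the piecewise linear interpolation $\tilde g_j$ through the points $(d,g_j(d))$ for $d\in D_j$; concavity and monotonicity of $g_j$ pass to $\tilde g_j$, and the two functions agree at every distance that can actually be incurred by a solution. Next I would build an \textsc{flpm} instance on the same facilities $F$, with unchanged opening costs and metric, by replacing each client $j$ with a family of co-located copies, one per summand in the decomposition of $\tilde g_j$, with multiplicities and penalties as above.

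To argue that the reduction is approximation preserving, I would fix any set $S$ of open facilities and compare the costs attributed to each client. In \textsc{ncc-fl} client $j$ pays $g_j(d^*_j)=\tilde g_j(d^*_j)$, where $d^*_j=\min_{i\in S}d_{ij}$. In \textsc{flpm}, since all copies of $j$ share its location, they all see the same nearest-facility distance $d^*_j$, and each copy $(j,k)$ independently contributes $m_{j,k}\min(d^*_j,p_{j,k})$; summing over $k$ and applying the decomposition recovers $\tilde g_j(d^*_j)$. Thus the objectives coincide on every choice of opened facilities, so an $\alpha$-approximate solution to \textsc{flpm} pulls back (with the same set of open facilities) to an $\alpha$-approximate solution to \textsc{ncc-fl}, and the same reasoning transfers bifactor guarantees since opening costs are preserved exactly.

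The main conceptual step is this last one: verifying that the independent penalty decisions taken by the copies of $j$ in \textsc{flpm} aggregate exactly to a single evaluation of $g_j$ at the nearest-facility distance. This is ultimately automatic from the decomposition, precisely because the copies share client $j$'s location---each $\min$-term thus selects ``connect'' or ``pay penalty'' in lockstep with the corresponding linear piece of $g_j$. A minor subtlety worth checking is the encoding of the linear tail $s_K\cdot d$ by a finite (rather than infinite) penalty, but this is harmless as long as the chosen penalty exceeds the largest distance in the instance.
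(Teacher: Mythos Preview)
Your proposal is correct and is essentially the same construction as the paper's. The paper also replaces each client $j$ by co-located copies with penalties equal to the sorted distances $d^{(j)}_1,\dots,d^{(j)}_n$ and multiplicities equal to the successive slope drops of the piecewise-linear interpolant of $g_j$ (your $s_{k-1}-s_k$), then verifies equation-by-equation that the total cost at any set of open facilities matches $g_j$ evaluated at the nearest-facility distance; the only cosmetic difference is that the paper handles the tail by setting the last penalty to $d^{(j)}_n$ rather than to a value exceeding the diameter, which is equivalent since no larger distance occurs.
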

\begin{proof}
	Take an instance $I=(F, D, d, g)$ of the \textsc{ncc-fl} problem with $|F| = n$ facilities.  Create the instance $I'=(F, D',d', p, m)$ as follows. The set of the facilities is the same as in original instance. For each client $j \in D$, we will create in $I'$ multiple copies of $j$.
	
	Fix a single client $j$. Sort all the facilities by their distance to $j$ and let 
	$d^{(j)}_{1} \leq d^{(j)}_{2} \leq \dots \leq d^{(j)}_{n}$ be the sorted distances.
	For every $k \in [n-1]$ define also
	\begin{align}
		m^{(j)}_k = \frac{g_j(d^{(j)}_{k}) - g_j(d^{(j)}_{k-1})}{d^{(j)}_{k} - d^{(j)}_{k-1}} - \frac{g_j(d^{(j)}_{k+1}) - g_j(d^{(j)}_{k})}{d^{(j)}_{k+1} - d^{(j)}_{k}}
		\label{m-definition-1}
	\end{align} where for convenience we define $d^{j}_0 = g_j(d^{(j)}_0) = 0$. Let also
	\begin{align}
		m^{(j)}_n = \frac{g_j(d^{(j)}_{n}) - g_j(d^{(j)}_{n-1})}{d^{(j)}_{n} - d^{(j)}_{n-1}}
		\label{m-definition-2}
	\end{align}

	Observe that concavity of $g_j$ implies that every $m^{(j)}_k$ is nonnegative. Now, for each $k \in [n]$, we create a client $j_k$ in the location of $j$ with penalty set to $p_{j_k}=d^{(j)}_k$ and multiplicity $m^{(j)}_k$. It remains to show that for any subset of facilities $F' \subseteq F$ it holds that $cost_{I}(F') = cost_{I'}(F')$, where $cost_{.}(F')$ denotes the cost of a solution obtained by optimally assigning clients to facilities in $F'$. To see this, consider a client $j \in D$ and its closest facility $f$ in $F'$. Let $k$ be the index of $f$ in the vector $d^{(j)}$ of sorted distances, i.e. the distance between $j$ and $f$ is $d^{(j)}_k$. Observe that for the solution $F'$ in $I'$ the clients $j_1, \dots, j_i, \dots, j_{k-1}$ pay their penalty $d^{(j)}_i$ (as the penalty is at most the distance to the closest facility) and the clients $j_k, \dots j_{n}$ can be connected to $f$.
	Therefore, we require that for any $k$,
	
	\begin{align}
		g_j(d^{(j)}_k) = \sum_{i=1}^{k-1}m^{(j)}_{i} d^{(j)}_i + \sum_{i=k}^{n}m^{(j)}_{i} d^{(j)}_k
		\label{m-g-linear-system}
	\end{align}
	It can be observed, that $m^{(j)}_{i}$ as defined in (\ref{m-definition-1}) and (\ref{m-definition-2}) satisfy the system of equalities (\ref{m-g-linear-system}).
\qed\end{proof}

It remains to show how to solve \textsc{flpm}. Recall that our algorithm for \textsc{flp} is a combination of the JMS algorithm and an LP rounding algorithm. We will now briefly argue that each of the two can be adapted to the case with multiplicities. 

To adapt the JMS algorithm, one needs to take multiplicities into account when calculating the contributions of individual clients towards facility opening. Then in the analysis multiplicities can be scaled up and discretized to lose only an epsilon factor. Here we utilize that non-polynomial blowup in the number of clients is not a problem in the analysis.

To adapt the LP rounding algorithm, we first observe that multiplicities can easily be introduced to the LP formulation, and hence solving an LP relaxation of the problem with multiplicities is not a problem. Next, we utilize that the analysis of the expected connection (and penalty) cost of the algorithm is a per-client analysis. Therefore, by the linearity of expectation combined with linearity of the objective function with respect to multiplicities, the original analysis applies to the setting with multiplicities. A similar argument was previously utilized in~\cite{byrka2018proportional}.
\section{Approximation algorithms for SIRPFL}
    In this section we give the improved algorithms for all the three variants of  the Star Inventory Routing Problem with Facility Location that was recently introduced by Jiao and Ravi~\cite{IRP:WADS}. First, we recall the definition. We are given the set of facilities $F$ and clients $D$ in the metric space ($d$), a time horizon $1, \dots, T$, a set of demand points $(j,t)$ with $u^j_t$ units of demand requested by client $j$ due by day $t$, facility opening costs $f_i$, holding costs $h^j_{s,t}$ per unit of demand delivered on day $s$ serving ($j,t$). The objective is to open a set of facilities and plan the deliveries to each client to minimize the total facility opening costs, client-facility connections and storage costs. Consider first the uncapacitated variant in which every delivery can contain unbounded number of goods.

Observe that once the decision which facilities to open is made, each client can choose the closest open facility and use it for all the deliveries. In that case, we would be left with a single-item lot-sizing problem which can be solved to optimality.
The above view is crucial for our approach. Observe that we can precompute all the single-item lot-sizing instances for every pair $(i,j)$. Now we are left with a specific facility location instance that is captured by \textsc{ncc-fl}. The following lemma proves this reduction.
\begin{lemma}
\label{lem:sirpfl}
    There is a 1.488 approximation algorithm for uncapacitated \textsc{sirpfl}.
\end{lemma}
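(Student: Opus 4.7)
The plan is to reduce uncapacitated \textsc{sirpfl} to \textsc{ncc-fl} and then invoke the $1.488$-approximation for \textsc{ncc-fl} established in the previous section via Lemma~\ref{lem:ncc-fl}. I would keep the same facility set $F$, opening costs $f_i$, client set $D$, and metric $d$. For each client $j\in D$, define the connection-cost function
\[
g_j(\delta) \;=\; \min_{(S,\sigma)} \Bigl\{\,|S|\cdot\delta \;+\; \sum_{t\,:\, u^j_t>0} u^j_t\, h^j_{\sigma(t),t}\Bigr\},
\]
where $(S,\sigma)$ ranges over feasible lot-sizing schedules for client $j$: a nonempty set of delivery days $S\subseteq\{1,\dots,T\}$ together with an assignment $\sigma(t)\in S$ with $\sigma(t)\le t$ for each demand point $(j,t)$. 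Operationally, $g_j(\delta)$ is the optimum of the single-item lot-sizing problem for $j$ when each delivery incurs transportation cost $\delta$.

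The key structural claim is that $g_j$ is concave and nondecreasing on $[0,\infty)$. This is immediate from the definition: each inner expression is affine in $\delta$ with nonnegative slope $|S|$ and nonnegative intercept, and a pointwise minimum of such affine functions is concave and nondecreasing. Single-item lot-sizing is polynomial-time solvable (e.g.\ by the Wagner--Whitin dynamic program), so each value $g_j(\delta)$ can be computed efficiently, and only the $n$ values $g_j(d^{(j)}_1),\dots,g_j(d^{(j)}_n)$ are needed to feed the reduction of Lemma~\ref{lem:ncc-fl}. If necessary, I would shift $g_j$ by the constant $g_j(0)$ so that $g_j(0)=0$; the shift contributes a fixed additive term to every feasible objective value and does not affect the approximation ratio.

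Next I would verify that for every fixed set of open facilities $F'\subseteq F$ the optimal \textsc{sirpfl} cost using exactly $F'$ equals the \textsc{ncc-fl} cost of the same $F'$ on the constructed instance. Given $F'$, the \textsc{ncc-fl} cost of client $j$ is $g_j(\delta^{\star}_j)$ with $\delta^{\star}_j = \min_{i\in F'} d_{i,j}$. For \textsc{sirpfl} one may assume without loss of generality that every delivery for $j$ originates at a single facility of $F'$: rerouting each delivery for $j$ through the closest open facility $i^{\star}(j)$ leaves the holding cost unchanged and can only decrease the transportation cost. Once each client is served from a single facility, the \textsc{sirpfl} service cost of $j$ equals the optimum of its single-item lot-sizing instance with per-delivery cost $\delta^{\star}_j$, which is precisely $g_j(\delta^{\star}_j)$. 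Adding the shared opening cost $\sum_{i\in F'} f_i$ yields equality of the two objectives, so applying the $1.488$-approximation to the constructed \textsc{ncc-fl} instance produces a \textsc{sirpfl} solution of cost at most $1.488$ times optimal.

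The main technical check is the concavity and monotonicity of $g_j$; everything else is routine bookkeeping (evaluating $g_j$ at the required breakpoints, and the swap argument justifying single-facility service per client). Concavity reduces to the observation that a pointwise minimum of affine functions with nonnegative slopes and intercepts is concave and nondecreasing, after which the reduction plugs straight into Lemma~\ref{lem:ncc-fl}.
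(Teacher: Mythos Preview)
Your proposal is correct and follows essentially the same route as the paper: reduce uncapacitated \textsc{sirpfl} to \textsc{ncc-fl} by letting $g_j(\delta)$ be the optimal single-item lot-sizing cost for client $j$ with per-delivery cost $\delta$, then invoke the $1.488$-approximation via Lemma~\ref{lem:ncc-fl}. Your concavity argument (pointwise minimum of affine functions with nonnegative slopes) is a clean equivalent of the paper's scaling argument, and you add a bit more rigor with the rerouting justification and the shift by $g_j(0)$.
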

\begin{proof} [Proof of Lemma~\ref{lem:sirpfl}]
    For each $j \in D$ and $i \in F$ solve the instance of a single-item lot-sizing problem with delivery cost $d_{j,i}$ and demands and holding costs for client $j$ to optimality~\cite{wagner1958dynamic}. Now define $g_j(d_{j,i})$ to be the cost of this computed solution and linearly interpolate other values of function $g$.

    It is now easy to see that $g_j$ is an increasing concave function. This follows from the fact, that the optimum solution to the lot-sizing problem with delivery cost $x$ is also a feasible solution to the problem with delivery cost $\alpha \cdot x$ for $\alpha \geq 1$. Moreover the value of this solution for increased delivery cost is at most $\alpha$ times larger.

    In this way we obtained the instance of \textsc{ncc-fl} problem which can be solved using the algorithm of Lemma~\ref{lem:ncc-fl}. Once we know which facilities to open, we use optimal delivery schedules computed at the beginning.
\qed\end{proof}
Jiao and Ravi studied also capacitated splittable and capacitated unsplittable variants obtaining $24$ and $48$ approximation respectively~\cite{IRP:WADS}. By using corresponding $3$ and $6$ approximation for the capacitated splittable and unsplittable Inventory Access Problem (\textsc{IAP}) given in~\cite{IRP:WADS} (the variants of single-item lot-sizing) and a similar reduction to \textsc{ncc-fl} as above while using a suitable bi-factor algorithm for \textsc{FLP} we are able to give improved approximation algorithms for both capacitated variants of \textsc{sirpfl}.

\begin{lemma}
\label{lem:capacitated-sirpfl}
    There is a $3.236$-approximation algorithm for capacitated splittable \textsc{sirpfl} and $6.029$-approximation algorithm for capacitated unsplittable \textsc{sirpfl}.
\end{lemma}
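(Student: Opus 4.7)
The plan is to follow the blueprint of Lemma~\ref{lem:sirpfl} with two substitutions dictated by the statement: the exact lot-sizing subroutine is replaced by the $\alpha$-approximation for capacitated \textsc{IAP} of~\cite{IRP:WADS} ($\alpha=3$ for the splittable and $\alpha=6$ for the unsplittable variant), and the single-factor $1.488$-algorithm for \textsc{flp} is replaced by the bi-factor $(\lambda_f,\lambda_c)$-approximation for \textsc{flp} from the previous section, with $\lambda_c=1+2/e^{\lambda_f}$ and $\lambda_f\geq 1.6774$. The first concrete step is, for every pair $(j,i)\in D\times F$, to run the $\alpha$-\textsc{IAP} on the demand and holding data of $j$ with delivery cost $d_{j,i}$, and to record both the produced schedule and its cost $v_{j,i}$.

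Next I would manufacture, for each $j$, a concave nondecreasing connection-cost function $g_j$ whose value at every $d_{j,i}$ dominates $v_{j,i}$. The key structural observation is that the \emph{true} optimum $\mathrm{IAP}^\star_j(d)$, viewed as a function of the delivery cost $d$, is the pointwise minimum of the linear functions arising from the admissible schedules, hence concave and nondecreasing. Consequently $\alpha\cdot\mathrm{IAP}^\star_j$ is itself concave and nondecreasing and dominates every $(d_{j,i},v_{j,i})$. I would take $g_j$ to be the upper concave nondecreasing envelope of the points $\{(d_{j,i},v_{j,i})\}_i$ (piecewise linear, computable from the $v_{j,i}$), which satisfies
\[
v_{j,i}\;\leq\;g_j(d_{j,i})\;\leq\;\alpha\cdot\mathrm{IAP}^\star_j(d_{j,i})\qquad\text{for every }i.
\]
The resulting \textsc{ncc-fl} instance is reduced to \textsc{flpm} by Lemma~\ref{lem:ncc-fl} and solved with the bi-factor $(\lambda_f,\lambda_c)$ algorithm, producing a facility set $S'$. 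The \textsc{sirpfl} solution we output uses $S'$ together with, for each client $j$, the previously stored $\alpha$-approximate schedule with respect to its closest open facility in $S'$.

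For the analysis, let $S^\star$ be an optimum \textsc{sirpfl} solution of opening cost $F^\star$ and lot-sizing cost $C^\star$, and write $c_j(S)=d_{j,i}$ for the $i\in S$ closest to $j$. The \textsc{ncc-fl} value of $S^\star$ is at most $F^\star+\sum_j\alpha\cdot\mathrm{IAP}^\star_j(c_j(S^\star))=F^\star+\alpha C^\star$, so the bi-factor guarantee applied with this witness yields $F_{S'}+\sum_j g_j(c_j(S'))\leq\lambda_f F^\star+\alpha\lambda_c C^\star$. Since by construction $v_{j,c_j(S')}\leq g_j(c_j(S'))$, the cost of the returned \textsc{sirpfl} solution is at most $\lambda_f F^\star+\alpha\lambda_c C^\star\leq\max\{\lambda_f,\alpha\lambda_c\}\cdot\mathrm{OPT}$. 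Balancing by setting $\lambda_f=\alpha\lambda_c=\alpha\bigl(1+2e^{-\lambda_f}\bigr)$ yields numerically $\lambda_f\approx 3.236$ for $\alpha=3$ and $\lambda_f\approx 6.029$ for $\alpha=6$, matching the claim.

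The main obstacle I anticipate is arguing that $g_j$ can be made a genuine concave function: the reduction in Lemma~\ref{lem:ncc-fl} needs concavity to keep the multiplicities $m^{(j)}_k$ nonnegative, while the raw approximate values $v_{j,i}$ need not lie on any concave curve. Concavity of $\mathrm{IAP}^\star_j$ in the delivery cost is precisely what makes the upper concave envelope a safe choice: it dominates the approximate values (so the stored schedules are affordable in the final solution) yet stays within a factor $\alpha$ of the true optimum, and this extra factor is then absorbed cleanly by the connection-cost coordinate of the bi-factor.
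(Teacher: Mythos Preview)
Your argument is correct and follows the same overall blueprint as the paper: reduce to an \textsc{ncc-fl} instance via a carefully constructed concave $g_j$, solve it with the bifactor $(\lambda_f,1+2e^{-\lambda_f})$ algorithm, and balance $\lambda_f$ against $\alpha(1+2e^{-\lambda_f})$.

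The one place you diverge is in how you enforce concavity of $g_j$. The paper builds $g_j$ iteratively: it walks through the facilities in increasing distance from $j$, keeps reusing the previously chosen schedule as long as it beats the freshly computed $\alpha$-approximation, and sets $g_j(d_{j,i})$ to the cost of the currently held schedule evaluated at $d_{j,i}$; concavity then follows because each held schedule gives a linear cost function and switches only occur downwards. You instead exploit the structural fact that $\mathrm{IAP}^\star_j(d)$ is the pointwise minimum of affine functions of $d$ and hence concave, so $\alpha\cdot\mathrm{IAP}^\star_j$ is a concave nondecreasing majorant of all the approximate values $v_{j,i}$; this lets you take $g_j$ to be the upper concave nondecreasing envelope of the $v_{j,i}$'s and immediately get both $v_{j,i}\le g_j(d_{j,i})$ and $g_j(d_{j,i})\le \alpha\cdot\mathrm{IAP}^\star_j(d_{j,i})$. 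Your construction is arguably cleaner and makes transparent why the factor~$\alpha$ lands exactly on the connection-cost side of the bifactor, while the paper's construction has the minor advantage of directly producing, for every distance, a concrete schedule whose cost \emph{equals} $g_j(d_{j,i})$.
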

\begin{proof} [Proof of Lemma~\ref{lem:capacitated-sirpfl}]
    The approach is the same as in proof of Lemma~\ref{lem:sirpfl} but with a little twist. We give details only for splittable case as the unsplittable variant follows in the same way.

    For each $j \in D$ and $i \in F$ run the $3$-approximation algorithm~\cite{IRP:WADS} for the instance of a corresponding splittable Inventory Access Problem problem with delivery cost $d_{j,i}$ and demands and holding costs for client $j$.

    Notice that we cannot directly define $g_j(d_{j,i})$ to be the cost of the computed solution as the resulting function would not necessarily be concave (due to using approximate solutions instead of optimal).

    Therefore, we construct $g_j$ for each $j \in D$ in a slightly different way. W.l.o.g assume that $d_{j,1} \leq d_{j,2} \dots \leq d_{j,n}$. Let also $A(x)$ be the computed $3$-approximate solution for \text{IAP} with delivery cost $x$ and let $V(S, x)$ be the value of solution $S$ for \text{IAP} with delivery cost $x$. Notice that for $x<y$, the solution $A(x)$ is feasible for the same \text{IAP} instance but with delivery cost $y$. In particular, the following bound on cost is true: $V(A(x), y) \leq \frac{y}{x} \cdot V(A(x), x)$.

    We now construct a sequence of solutions. Let $S_1 = A(d_{j,1})$. Now, for each $i \in [n-1]$ define:
$$
    S_{i+1} =
    \begin{cases}
        S_i,            & \text{if } V(S_i,d_{j,i+1}) < V(A(d_{j,i+1}),d_{j,i+1})\\
        A(d_{j,i+1}),   & \text{otherwise}
    \end{cases}
$$
	Finally take $g(d_{j,i}) = V(S_i, d_{j,i})$ and linearly interpolate other values. It can be easily observed that $g_j$ is a nondecreasing concave function.

    Finally, we are using the bifactor ($\lambda_f, 1+2e^{-\lambda_f}$)-approximation algorithm to solve the resulting instance of \textsc{ncc-fl}. Because we also lose a factor of $3$ for connection cost, the resulting ratio is equal to $\max\{\lambda_f, 3\cdot(1+2e^{-\lambda_f})\}$.
    The two values are equal for $\lambda_f \approx 3.23594$.
\qed\end{proof}
    
%\section*{Concluding remarks }
%We believe that our results could be useful in other contexts as the abstraction provided by our \textsc{ncc-fl} problem is quite general. In particular an algorithm for \textsc{ncc-fl} may potentially be used as a subroutine when ...

\bibliographystyle{splncs04}
\bibliography{references}

\end{document}